\def\oper{{\mathchoice{\rm 1\mskip-4mu l}{\rm 1\mskip-4mu l}
{\rm 1\mskip-4.5mu l}{\rm 1\mskip-5mu l}}}
\def\<{\langle}
\def\>{\rangle}
\newcommand{\Tr}{\mathrm{Tr}}
\def\oper{{\mathchoice{\rm 1\mskip-4mu l}{\rm 1\mskip-4mu l}
{\rm 1\mskip-4.5mu l}{\rm 1\mskip-5mu l}}}
\DeclareMathAlphabet\mathbfcal{OMS}{cmsy}{b}{n}
\mathchardef\mhyphen="2D % Define a "math hyphen"
\newcommand{\BH}{\mathcal{B}(\mathcal{H})}
\newcommand{\BHH}{\mathcal{B}_{\rm H}(\mathcal{H})}
\newcommand{\HH}{\mathcal{H}}
\newtheorem{Theorem}{Theorem}
\newtheorem{Lemma}{Lemma}
\newtheorem{Corollary}{Corollary}
\newtheorem{Definition}{Definition}
\newtheorem{Proposition}{Proposition}
\newtheorem{Example}{Example}
\begin{document}

\title{Interpolating between positive and completely positive maps: a new hierarchy of entangled states}
%Trace-norm contractivity vs. Kadison-Schwarz inequality for qubit maps}

\author{Katarzyna Siudzi\'{n}ska, Sagnik Chakraborty, and Dariusz Chru\'{s}ci\'{n}ski}
\affiliation{Institute of Physics, Faculty of Physics, Astronomy and Informatics \\  Nicolaus Copernicus University, ul. Grudzi\k{a}dzka 5/7, 87--100 Toru\'{n}, Poland}

\begin{abstract}
A new class of positive maps is introduced. It interpolates between positive and completely positive maps. It is shown that this class gives rise to a new characterization of entangled states. Additionally, it provides a refinement of the well-known classes of entangled states characterized in term of the Schmidt number. The analysis is illustrated with examples of qubit maps.
\end{abstract}

\flushbottom

\maketitle

\thispagestyle{empty}

\section{Introduction}

Both positive and completely positive maps play an essential role in quantum information theory \cite{Nielsen}. Recall that a linear map $\Phi : \BH \to \BH$ is positive if $\Phi[X] \geq 0$ for $X  \geq 0$ \cite{Stormer,Stormer2,Paulsen}. Moreover, if $\Phi$ is trace-preserving, then it maps quantum states (represented by density operators) into quantum states. In what follows, we consider only finite-dimensional Hilbert spaces $\HH$, where ${\rm dim}\, \HH =d$. Also, we denote a vector space of (bounded) operators acting on $\HH$ by $\BH$. Interestingly, quantum physics requires a more refined notion of positivity due to the fact that a tensor product $\Phi_1 \otimes \Phi_2$ of two positive maps is not necessarily a positive map. A map $\Phi$ is called {\it $k$-positive} if
\begin{equation}\label{}
{\rm id}_k \otimes \Phi : M_k(\mathbb{C}) \otimes \BH \to M_k(\mathbb{C}) \otimes \BH ,
\end{equation}
is positive, where ${\rm id}_k$ denotes the identity map on $M_k(\mathbb{C})$, which is a vector space of $k \times k$ complex matrices. Finally, a map is completely positive if it is $k$-positive for all $k=1,2,\ldots$. Actually, in the finite-dimensional case, complete positivity is equivalent to $d$-positivity. Hence, if $\mathcal{P}_k$ denotes a (convex) set of $k$-positive, trace-preserving maps, then there is the following chain of inclusions,
\begin{equation}\label{PPP}
  {\rm CPTP \ maps} = \mathcal{P}_d \subset \mathcal{P}_{d-1} \subset \ldots \subset \mathcal{P}_1 = {\rm PTP \ maps} .
\end{equation}
Completely positive maps play a key role in quantum information theory since they correspond to physical operations. In particular, completely positive, trace-preserving (CPTP) maps provide mathematical representations of quantum channels. Any CPTP map satisfies the data processing inequality \cite{Nielsen,Wilde,Hayashi}. Namely, for an arbitrary quantum channel $\mathcal{E}$ and any pair of states $\rho,\sigma$, one has \cite{Lindblad}
\begin{equation}\label{DPI}
   D(\rho||\sigma) \geq  D(\mathcal{E}[\rho]||\mathcal{E}[\sigma]),
\end{equation}
where $D(\rho||\sigma)$ is the relative entropy. Interestingly, it turns out that condition (\ref{DPI}) holds for any PTP map.

Maps that are positive but not completely positive find elegant applications in the theory of entanglement \cite{HHHH,Guhne,TOPICAL,Zyczkowski}. A state $\rho_{AB}$ in $\HH_A \otimes \HH_B$ is separable if and only if
\begin{equation}\label{ent}
  ({\rm id} \otimes \Phi)[\rho_{AB}] \geq 0
\end{equation}
for all positive maps $\Phi$. Hence, any violation of (\ref{ent}) witnesses entanglement of $\rho_{AB}$.
The key property of any PTP map is its contractivity with respect to the trace norm \cite{Kossak1972B},
\begin{equation}\label{}
  \| \Phi[X] \|_{\rm tr} \leq \|X\|_{\rm tr} ,
\end{equation}
for any $X \in \BH$. This implies that distinguishability between any pair of density operators $\rho$ and $\sigma$, defined by $\| \rho - \sigma\|_{\rm tr}$, cannot increase under the action of a PTP map. Similarly, if $\Phi$ is $k$-positive and trace-preserving, then ${\rm id}_k \otimes \Phi$ is contractive.

In this paper, we introduce a new family of maps such that ${\rm id} \otimes \Phi$ is contractive but only on the subspaces of $\mathcal{B}(\HH \otimes \HH)$ of particular dimensions. We call such maps {\it $k$-partially contractive}, where now $k\in \{1,\ldots,d^2\}$. For $k=1$ and $k=d^2$, one reproduces PTP maps and CPTP maps, respectively. Hence, this new family interpolates between these two important classes.
The inspiration for $k$-partially contractive maps comes from \cite{SagnikPQ}, where the author considered the strength of non-Markovian evolution that lies between P and CP-divisible dynamical maps.
We provide the characterisation of partially contractive maps and illustrate this concept with simple qubit maps. Interestingly, in the qubit case, we find a connection between partially contractive maps and the Schwarz maps. The class of partially contractive maps allows us to introduce a new hierarchy of entangled states in full analogy to the well-known characterization in terms of the Schmidt number \cite{Terhal,HHHH,Guhne,TOPICAL}. In the qubit case, this new characterization interpolates between separable states (Schmidt number = 1) and entangled states (Schmidt number = 2). Hence, it provides a refinement of the Schmidt number classes. A simple illustration of two-qubit isotropic states is discussed. We hope that the new class of partially contractive maps introduced in this paper will also allow for a more refined analysis of entangled states in higher-dimensional quantum systems.

\section{Partially contractive maps}

Denote by $\mathcal{B}_{\rm H}(\HH)$ a real subspace of Hermitian operators in $\BH$. Let us recall the following characterisation of PTP maps \cite{Paulsen,Szarek}.

\begin{Proposition} \label{PRO-I} Assume that $\Phi : \BH \to \BH$ is a map that preserves both trace and Hermiticity. Then, $\Phi$ is positive if and only if
\begin{equation}\label{}
  \| \Phi[X] \|_{\rm tr} \leq \|X\|_{\rm tr} ,
\end{equation}
for all Hermitian operators $X \in \mathcal{B}_{\rm H}(\HH)$.
\end{Proposition}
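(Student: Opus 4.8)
The plan is to prove the two implications separately, with the Jordan (positive/negative part) decomposition of Hermitian operators doing all the work. For any $X \in \BHH$ write $X = X_+ - X_-$, where $X_+, X_- \geq 0$ are supported on the orthogonal spectral subspaces belonging to the nonnegative and negative eigenvalues of $X$. Two elementary consequences will be used repeatedly: first, $\|X\|_{\rm tr} = \tr X_+ + \tr X_-$; second, combining this with $\tr X = \tr X_+ - \tr X_-$ gives the identity $\|X\|_{\rm tr} = \tr X + 2\tr X_-$, i.e.\ the trace norm of a Hermitian operator exceeds its trace by exactly twice the trace of its negative part.

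For the forward implication I would assume $\Phi$ positive and trace-preserving and take an arbitrary $X \in \BHH$. Applying $\Phi$ to the Jordan decomposition yields $\Phi[X] = \Phi[X_+] - \Phi[X_-]$ with $\Phi[X_\pm] \geq 0$ by positivity. The triangle inequality gives $\|\Phi[X]\|_{\rm tr} \leq \|\Phi[X_+]\|_{\rm tr} + \|\Phi[X_-]\|_{\rm tr}$; since each $\Phi[X_\pm]$ is positive, its trace norm equals its trace, and trace preservation turns these into $\tr X_+$ and $\tr X_-$. Their sum is $\|X\|_{\rm tr}$, which is the desired contractivity. This direction is routine and should present no difficulty.

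For the converse I would argue by contraposition. If $\Phi$ is not positive, there is a state $\rho \geq 0$ with $\tr \rho = 1$ whose image has a nonzero negative part, $(\Phi[\rho])_- \neq 0$. Because $\Phi$ preserves trace and Hermiticity, $\Phi[\rho]$ is Hermitian with $\tr \Phi[\rho] = 1$, so the identity above gives $\|\Phi[\rho]\|_{\rm tr} = 1 + 2 \tr (\Phi[\rho])_- > 1 = \|\rho\|_{\rm tr}$, contradicting contractivity. Hence contractivity on all of $\BHH$ forces positivity.

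Since everything reduces to the Jordan decomposition together with trace preservation, I do not anticipate a genuine obstacle; the single point that needs care is the bookkeeping in the identity $\|X\|_{\rm tr} = \tr X + 2\tr X_-$, as it is precisely this relation that converts a failure of positivity into a strict violation of the norm inequality.
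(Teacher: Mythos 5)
Your proof is correct. The paper states this proposition without giving a proof (it simply cites the literature), and your argument via the Jordan decomposition $X=X_+-X_-$, the identity $\|X\|_{\rm tr}=\tr X+2\,\tr X_-$, and contraposition for the converse is exactly the standard one that those references contain; both directions are complete, including the normalization of the witness $X\geq 0$ to a state and the use of Hermiticity preservation to make the trace-norm identity applicable to $\Phi[\rho]$.
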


Let $\{\rho_1,\ldots,\rho_k\}$ be a set of linearly independent density operators in $\BHH$ and denote by $\mathcal{M}(\{\rho_1,\ldots,\rho_k\}) = {\rm span}_\mathbb{R} \{\rho_1,\ldots,\rho_k\}$ a real linear subspace of $\mathcal{B}_{\rm H}(\HH)$.

\begin{Definition}\label{def} A trace-preserving, Hermiticity-preserving map $\Phi : \BH \to \BH$ is called {\it $k$-partially contractive} if, for any set of linearly independent $\{\rho_1,\ldots,\rho_k\}$,
\begin{equation}\label{}
  \| ({\rm id} \otimes \Phi)[X] \|_{\rm tr} \leq \| X \|_{\rm tr}
\end{equation}
for all $X \in \BHH \otimes \mathcal{M}(\{\rho_1,\ldots,\rho_k\})$.
\end{Definition}

\begin{Corollary} One has the following correspondence between partial contractivity criteria and positivity of quantum maps:
\begin{enumerate}
  \item  $\Phi$ is PTP iff it is 1-partially contractive;
  \item $\Phi$ is CPTP iff it is $d^2$-partially contractive.
\end{enumerate}
Hence, $k$-partially contractive, trace-preserving maps are interpolated between PTP and CPTP maps.
\end{Corollary}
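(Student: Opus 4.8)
The plan is to reduce each of the two extreme cases to the contractivity criterion of Proposition~\ref{PRO-I}, since both the PTP and the CPTP classes admit such a characterisation. Throughout I would use the multiplicativity of the trace norm on tensor products, $\|A\otimes B\|_{\rm tr}=\|A\|_{\rm tr}\,\|B\|_{\rm tr}$, and the normalisation $\|\rho\|_{\rm tr}=1$ valid for every density operator.

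For the case $k=1$, I would first note that $\mathcal{M}(\{\rho_1\})=\mathrm{span}_{\mathbb{R}}\{\rho_1\}$ is one-dimensional, so every $X\in\BHH\otimes\mathcal{M}(\{\rho_1\})$ has product form $X=Y\otimes\rho_1$ with $Y\in\BHH$. Since $(\mathrm{id}\otimes\Phi)[Y\otimes\rho_1]=Y\otimes\Phi[\rho_1]$, multiplicativity collapses the $1$-partial-contractivity inequality (for $Y\neq 0$) to $\|\Phi[\rho_1]\|_{\rm tr}\leq\|\rho_1\|_{\rm tr}=1$, required for every density operator $\rho_1$. The remaining step is to lift this from states to all Hermitian operators: given Hermitian $X$ with Jordan decomposition $X=X_+-X_-$ into orthogonal positive parts, normalising $X_\pm$ to density operators and applying the triangle inequality gives $\|\Phi[X]\|_{\rm tr}\leq\Tr X_++\Tr X_-=\|X\|_{\rm tr}$, which is exactly the hypothesis of Proposition~\ref{PRO-I}, so $\Phi$ is positive. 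The converse is immediate: if $\Phi$ is PTP, then Proposition~\ref{PRO-I} yields $\|\Phi[\rho_1]\|_{\rm tr}\leq 1$, and multiplicativity reinstates the full inequality on product operators.

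For the case $k=d^2$, the key observation is dimensional: $\BHH$ has real dimension $d^2$, so any linearly independent family $\{\rho_1,\ldots,\rho_{d^2}\}$ of density operators is automatically a basis, whence $\mathcal{M}(\{\rho_1,\ldots,\rho_{d^2}\})=\BHH$ irrespective of the chosen set, and $\BHH\otimes\mathcal{M}=\mathcal{B}_{\rm H}(\HH\otimes\HH)$ ranges over all Hermitian operators on $\HH\otimes\HH$. The defining inequality then reads $\|(\mathrm{id}\otimes\Phi)[X]\|_{\rm tr}\leq\|X\|_{\rm tr}$ for every Hermitian $X$. Since $\mathrm{id}\otimes\Phi$ is itself trace- and Hermiticity-preserving, I would apply Proposition~\ref{PRO-I} to $\mathrm{id}\otimes\Phi$ on $\mathcal{B}(\HH\otimes\HH)$: this contractivity is equivalent to positivity of $\mathrm{id}\otimes\Phi$, i.e. to $d$-positivity of $\Phi$, which in finite dimension coincides with complete positivity, as recalled in the discussion around~(\ref{PPP}). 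The closing ``interpolation'' assertion is then the informal consequence of these two endpoint identifications.

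The main obstacle I anticipate is not the endpoint reductions themselves but handling the universal quantifier ``for any set $\{\rho_1,\ldots,\rho_k\}$'' cleanly. For $k=1$ one must verify the inequality over the \emph{entire} family of states, so that the Jordan-decomposition lift genuinely covers all Hermitian $X$; for $k=d^2$ one must confirm that every admissible family spans the whole space, rendering the quantifier vacuous. Once these quantifier issues are dispatched, the tensor-norm multiplicativity and the state normalisation are routine and can be cited.
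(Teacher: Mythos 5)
Your proposal is correct and follows the route the paper intends: the paper offers no explicit proof, treating the Corollary as an immediate consequence of Proposition~\ref{PRO-I} (applied to $\Phi$ itself for $k=1$, and to ${\rm id}\otimes\Phi$ for $k=d^2$, using that any $d^2$ linearly independent density operators span $\BHH$ so that $\BHH\otimes\mathcal{M}=\mathcal{B}_{\rm H}(\HH\otimes\HH)$). Your additional details — the multiplicativity of the trace norm on product operators and the Jordan-decomposition lift from states to arbitrary Hermitian operators — correctly fill in the steps the paper leaves implicit.
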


Denote by $\mathcal{C}_k$ a set of $k$-partially contractive, trace-preserving maps. It is easy to show that there is an inclusion relation between different $\mathcal{C}_k$.

\begin{Proposition} The set $\mathcal{C}_k$ is a convex subset of $\mathcal{P}_1$ (a set of PTP maps). Moreover,
\begin{equation}\label{CCC}
  {\rm CPTP \ maps} = \mathcal{C}_{d^2} \subset \mathcal{C}_{d^2-1} \subset \ldots \subset \mathcal{C}_2 \subset \mathcal{C}_1 = \mathcal{P}_1 = {\rm PTP \ maps} .
\end{equation}
If $\Phi_1 \in \mathcal{C}_k$ and $\Phi_2 \in \mathcal{C}_\ell$, then the composition $\Phi_1 \circ \Phi_2 \in \mathcal{C}_{\min\{k,\ell\}}$.
\end{Proposition}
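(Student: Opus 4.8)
The plan is to treat the three claims in order, using two elementary facts throughout: the triangle inequality for the trace norm, and the fact that the density operators span the real space $\BHH$ (of dimension $d^2$), so that any linearly independent set of $j<d^2$ density operators can be enlarged to a linearly independent set of $j+1$ density operators. Convexity is the easiest: trace and Hermiticity preservation survive convex combinations, and since $\Phi\mapsto{\rm id}\otimes\Phi$ is linear, for $\Phi,\Psi\in\mathcal{C}_k$, $\lambda\in[0,1]$, and $X$ in the relevant subspace the triangle inequality gives
\[
\|({\rm id}\otimes(\lambda\Phi+(1-\lambda)\Psi))[X]\|_{\rm tr}\le\lambda\|({\rm id}\otimes\Phi)[X]\|_{\rm tr}+(1-\lambda)\|({\rm id}\otimes\Psi)[X]\|_{\rm tr}\le\|X\|_{\rm tr}.
\]
The assertion $\mathcal{C}_k\subseteq\mathcal{P}_1$ then follows from the inclusion chain below together with $\mathcal{C}_1=\mathcal{P}_1$.

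For the chain, the two endpoints $\mathcal{C}_{d^2}={\rm CPTP}$ and $\mathcal{C}_1={\rm PTP}$ are exactly the content of the preceding Corollary, so it suffices to prove $\mathcal{C}_k\subseteq\mathcal{C}_{k-1}$ for $2\le k\le d^2$. I would fix $\Phi\in\mathcal{C}_k$ and an arbitrary linearly independent set $\{\rho_1,\ldots,\rho_{k-1}\}$ of density operators and, by the spanning fact, extend it to a linearly independent set $\{\rho_1,\ldots,\rho_k\}$. Then $\mathcal{M}(\{\rho_1,\ldots,\rho_{k-1}\})\subseteq\mathcal{M}(\{\rho_1,\ldots,\rho_k\})$, so $\BHH\otimes\mathcal{M}(\{\rho_1,\ldots,\rho_{k-1}\})$ is a subspace of $\BHH\otimes\mathcal{M}(\{\rho_1,\ldots,\rho_k\})$; as $\Phi$ is contractive on the latter, it is contractive on the former, and since the $(k-1)$-set was arbitrary, $\Phi\in\mathcal{C}_{k-1}$. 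Only the inclusions, not their strictness, are needed here; strictness is witnessed by the qubit examples discussed later.

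For the composition claim, put $m=\min\{k,\ell\}$. By the chain just proved, $\Phi_1\in\mathcal{C}_m$ and $\Phi_2\in\mathcal{C}_m$, so it is enough to show that $\mathcal{C}_m$ is closed under composition; trace and Hermiticity preservation are immediate. I would fix linearly independent density operators $\{\rho_1,\ldots,\rho_m\}$ and $X\in\BHH\otimes\mathcal{M}(\{\rho_1,\ldots,\rho_m\})$, and set $Y=({\rm id}\otimes\Phi_2)[X]$, so that $({\rm id}\otimes(\Phi_1\circ\Phi_2))[X]=({\rm id}\otimes\Phi_1)[Y]$. Writing $X=\sum_i A_i\otimes M_i$ with $M_i\in\mathcal{M}(\{\rho_1,\ldots,\rho_m\})$ shows $Y\in\BHH\otimes\mathcal{N}$, where $\mathcal{N}={\rm span}_\mathbb{R}\{\Phi_2[\rho_1],\ldots,\Phi_2[\rho_m]\}$ and each $\Phi_2[\rho_j]$ is again a density operator because $\Phi_2$ is PTP.

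I expect the main obstacle to be exactly this last step: $\Phi_2$ need not be injective, so the images $\Phi_2[\rho_j]$ may be linearly dependent and $\mathcal{N}$ may have dimension $m'<m$. This is resolved by the inclusion chain: choosing a maximal linearly independent subset of the $\Phi_2[\rho_j]$ exhibits $\mathcal{N}$ as the span of $m'\le m$ linearly independent density operators, and since $\mathcal{C}_m\subseteq\mathcal{C}_{m'}$ the map $\Phi_1$ is $m'$-partially contractive, hence contractive on $\BHH\otimes\mathcal{N}$. Combining the two bounds,
\[
\|({\rm id}\otimes(\Phi_1\circ\Phi_2))[X]\|_{\rm tr}=\|({\rm id}\otimes\Phi_1)[Y]\|_{\rm tr}\le\|Y\|_{\rm tr}=\|({\rm id}\otimes\Phi_2)[X]\|_{\rm tr}\le\|X\|_{\rm tr},
\]
where the last inequality uses $\Phi_2\in\mathcal{C}_m$. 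Hence $\Phi_1\circ\Phi_2\in\mathcal{C}_m=\mathcal{C}_{\min\{k,\ell\}}$, which completes the plan.
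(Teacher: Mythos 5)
Your proof is correct and complete; the paper itself gives no proof of this proposition (it only remarks that the inclusions are ``easy to show''), and your argument supplies exactly the natural one. The one point genuinely worth spelling out --- which you do --- is the composition step: $\Phi_2$ need not be injective, so $\{\Phi_2[\rho_1],\ldots,\Phi_2[\rho_m]\}$ may span a subspace of dimension $m'<m$, and you correctly invoke the already-established inclusion $\mathcal{C}_m\subseteq\mathcal{C}_{m'}$ together with the fact that the images are again density operators (since $\Phi_2$ is PTP) to conclude that $\Phi_1$ is contractive there.
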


Given a set $\{\rho_1,\ldots,\rho_k\}$ and a map $\Phi : \BH \to \BH$, define a map restricted to $\mathcal{M}(\{\rho_1,\ldots,\rho_k)\}$ by
\begin{equation}\label{}
  \Phi_\mathcal{M}[X] = \Phi[X]
\end{equation}
for any $X \in \mathcal{M}(\{\rho_1,\ldots,\rho_k\})$.

\begin{Theorem} Let $\Phi : \BH \to \BH$ be a trace-preserving map. If for any linearly independent set $\{\rho_1,\ldots,\rho_k\}$ the restricted map $\Phi_\mathcal{M}$ can be extended to a CPTP map on $\BH$, then $\Phi$ is $k$-partially contractive.
\end{Theorem}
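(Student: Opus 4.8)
The plan is to reduce the partial-contractivity estimate to the ordinary trace-norm contractivity of completely positive trace-preserving maps, by replacing $\Phi$ with the CPTP extension supplied by the hypothesis. The crucial observation is that $k$-partial contractivity only concerns operators supported in $\BHH \otimes \mathcal{M}(\{\rho_1,\ldots,\rho_k\})$, and on such operators only the values of $\Phi$ on $\mathcal{M}(\{\rho_1,\ldots,\rho_k\})$ enter. Since $\Phi$ agrees there with a CPTP map, the estimate should follow from the known behaviour of CPTP maps.

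First I would fix an arbitrary linearly independent set $\{\rho_1,\ldots,\rho_k\}$ and abbreviate $\mathcal{M} = \mathcal{M}(\{\rho_1,\ldots,\rho_k\})$. By assumption there is a CPTP map $\widetilde{\Phi} : \BH \to \BH$ with $\widetilde{\Phi}[Y] = \Phi_\mathcal{M}[Y] = \Phi[Y]$ for every $Y \in \mathcal{M}$. Next I would verify that ${\rm id} \otimes \Phi$ and ${\rm id} \otimes \widetilde{\Phi}$ coincide on the whole subspace $\BHH \otimes \mathcal{M}$: any such $X$ admits a finite decomposition $X = \sum_j A_j \otimes Y_j$ with $A_j \in \BHH$ and $Y_j \in \mathcal{M}$, so that $({\rm id} \otimes \Phi)[X] = \sum_j A_j \otimes \Phi[Y_j] = \sum_j A_j \otimes \widetilde{\Phi}[Y_j] = ({\rm id} \otimes \widetilde{\Phi})[X]$.

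Then I would invoke contractivity of the extension. As $\widetilde{\Phi}$ is CPTP and ${\rm id}$ is CPTP, the tensor product ${\rm id} \otimes \widetilde{\Phi}$ is a CPTP map on $\mathcal{B}(\HH \otimes \HH)$; in particular it preserves trace and Hermiticity and is positive, so Proposition \ref{PRO-I} applied on $\HH \otimes \HH$ gives $\| ({\rm id} \otimes \widetilde{\Phi})[X] \|_{\rm tr} \leq \| X \|_{\rm tr}$ for every Hermitian $X$. Every element of $\BHH \otimes \mathcal{M}$ is a real combination of tensor products of Hermitian operators and is therefore Hermitian, so this bound applies to the $X$ above. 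Combining with the previous step yields $\| ({\rm id} \otimes \Phi)[X] \|_{\rm tr} = \| ({\rm id} \otimes \widetilde{\Phi})[X] \|_{\rm tr} \leq \| X \|_{\rm tr}$, and since the linearly independent set was arbitrary this is precisely the definition of $k$-partial contractivity.

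The only delicate point — and where I expect the modest work to lie — is the transfer of the inequality from $\widetilde{\Phi}$ to $\Phi$: one must make the interchange of the two maps inside the trace norm fully rigorous (which is exactly what the restriction to $\BHH \otimes \mathcal{M}$ guarantees) and check that the operators in question are Hermitian, so that Proposition \ref{PRO-I} is genuinely applicable. No extension has to be constructed, as its existence is assumed; the proof is essentially a localisation of the global contractivity of CPTP maps to the prescribed subspace.
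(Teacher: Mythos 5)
Your proposal is correct and follows essentially the same route as the paper's own proof: replace $\Phi$ by its CPTP extension on $\BHH \otimes \mathcal{M}(\{\rho_1,\ldots,\rho_k\})$, where the two maps agree after tensoring with the identity, and invoke trace-norm contractivity of the CPTP map ${\rm id}\otimes\widetilde{\Phi}$. Your additional care about the decomposition $X=\sum_j A_j\otimes Y_j$ and the Hermiticity needed for Proposition~\ref{PRO-I} only makes explicit steps the paper leaves implicit.
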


\begin{proof}
If $\widetilde{\Phi}_\mathcal{M}$ is a CPTP extension of $\Phi_\mathcal{M}$, then one has
\begin{equation}\label{}
  \left\| ({\rm id} \otimes \widetilde{\Phi})[X] \right\|_{\rm tr} \leq \|X \|_{\rm tr}
\end{equation}
for any $\BHH \otimes \BHH$. Hence, if $X \in \mathcal{B}_{\rm H}(\HH) \otimes \mathcal{M}(\{\rho_1,\ldots,\rho_k\})$, then
\begin{equation}\label{}
  \left\| ({\rm id} \otimes {\Phi})[X] \right\|_{\rm tr}   =  \left\| ({\rm id} \otimes \widetilde{\Phi})[X] \right\|_{\rm tr}  \leq \|X \|_{\rm tr},
\end{equation}
which proves $k$-partial contractivity of $\Phi$.
\end{proof}

The problem of finding extensions for positive and completely positive maps is well-studied in mathematical literature. Let us recall a seminal extension theorem of Arveson \cite{Arveson}.

\begin{Theorem} Assume that $\Phi : S \to \BH$ is a CP unital map, where $S$ denotes an operator system in $\BH$ (i.e., $\oper \in S$ and if $X \in S$, then $X^\dagger \in S$). Then, there exists a (not unique) CP unital extension $\widetilde{\Phi} : \BH \to \BH$ of $\Phi$ to $\BH$.
\end{Theorem}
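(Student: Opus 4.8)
The plan is to reduce the extension problem for completely positive maps to the classical extension problem for positive linear functionals, where the Hahn–Banach machinery is available. One cannot simply invoke the Stinespring dilation here, since $S$ is only an operator system and not a $*$-algebra, so there is no multiplicative structure to dilate; this is precisely what makes the theorem nontrivial. Because $\HH$ is finite-dimensional we have $\BH = M_d(\mathbb{C})$, and the whole argument can be carried out inside the finite matrix algebra $M_d(\mathbb{C}) \otimes M_d(\mathbb{C}) \cong M_{d^2}(\mathbb{C})$, so the weak-$*$ compactness arguments required in the general formulation are unnecessary.

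First I would set up the map–functional correspondence. Viewing $S \subseteq \BH = M_d(\mathbb{C})$, consider the operator system $M_d(S) = M_d(\mathbb{C}) \otimes S$ sitting inside $M_d(\mathbb{C}) \otimes M_d(\mathbb{C})$, whose unit is $\oper \otimes \oper$ (it lies in $M_d(S)$ since $\oper \in S$). Fixing an orthonormal basis $\{e_1,\ldots,e_d\}$ of $\HH$, to a linear map $\Phi : S \to M_d(\mathbb{C})$ I attach the functional
\begin{equation}
s_\Phi\big( (a_{ij}) \big) = \frac{1}{d} \sum_{i,j=1}^{d} \big\langle e_i, \Phi[a_{ij}]\, e_j \big\rangle , \qquad (a_{ij}) \in M_d(S).
\end{equation}
The lemma to establish is that $\Phi \mapsto s_\Phi$ is a bijection onto the linear functionals on $M_d(S)$, and that $\Phi$ is completely positive (equivalently $d$-positive, since the target is $M_d(\mathbb{C})$) if and only if $s_\Phi$ is positive. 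The inverse is explicit, $[\Phi[b]]_{pq} = d\, s_\Phi(E_{pq} \otimes b)$ with $E_{pq}$ the matrix units, which is what I will use to reconstruct the extended map. A direct computation shows that unitality $\Phi[\oper] = \oper$ is equivalent to the normalization $s_\Phi(\oper \otimes \oper) = 1$, i.e.\ $s_\Phi$ is a state on $M_d(S)$.

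The central step, which I expect to be the main obstacle, is to extend the positive functional $s_\Phi$ from the operator system $M_d(S)$ to a positive functional $\tilde{s}$ on all of $M_d(\mathbb{C}) \otimes M_d(\mathbb{C})$. Here the decisive tool is the order-theoretic characterization of positivity on a unital operator system $T$: a bounded linear functional $s$ on $T$ is positive if and only if $\|s\| = s(\oper)$. Since $s_\Phi$ is a state, $\|s_\Phi\| = s_\Phi(\oper \otimes \oper) = 1$; extending by Hahn–Banach to a functional $\tilde{s}$ on $M_{d^2}(\mathbb{C})$ with the same norm $\|\tilde{s}\| = 1$, and noting that $\oper \otimes \oper$ already lies in the subspace $M_d(S)$ so that $\tilde{s}(\oper \otimes \oper) = 1$ is automatically preserved, one gets $\|\tilde{s}\| = \tilde{s}(\oper \otimes \oper)$. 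The positivity criterion then forces $\tilde{s} \geq 0$. Verifying that criterion (in particular that the norm condition rules out any spurious imaginary part, via the standard scaling/rotation argument) is the delicate point of the whole proof.

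Finally I would invert the correspondence of the second step: the positive, normalized functional $\tilde{s}$ on $M_d(\mathbb{C}) \otimes M_d(\mathbb{C})$ defines a completely positive map $\widetilde{\Phi} : M_d(\mathbb{C}) \to M_d(\mathbb{C})$ through $[\widetilde{\Phi}[b]]_{pq} = d\, \tilde{s}(E_{pq} \otimes b)$. Because $\tilde{s}$ restricts to $s_\Phi$ on $M_d(S)$, and $E_{pq} \otimes b \in M_d(S)$ whenever $b \in S$, the map $\widetilde{\Phi}$ agrees with $\Phi$ on $S$; the normalization $\tilde{s}(\oper \otimes \oper) = 1$ yields $\widetilde{\Phi}[\oper] = \oper$, i.e.\ unitality. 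Non-uniqueness is then immediate, since the Hahn–Banach extension produced in the central step is itself highly non-unique.
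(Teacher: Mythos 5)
The paper does not actually prove this statement: it is quoted as Arveson's extension theorem with a citation to \cite{Arveson}, so there is no in-paper argument to compare against. Judged on its own, your proof is correct and is the standard textbook route (essentially Paulsen's) for matrix-valued targets: the bijection between maps $\Phi : S \to M_d(\mathbb{C})$ and linear functionals on the operator system $M_d(S)$, the equivalence of complete positivity of $\Phi$ with positivity of $s_\Phi$, the norm--positivity criterion $\|s\| = s(\oper)$ combined with Hahn--Banach to extend the state, and the inverse correspondence to recover $\widetilde{\Phi}$. You are also right that finite-dimensionality lets you skip the BW/weak-$*$ compactness step needed for general $\mathcal{B}(\mathcal{H})$-valued maps, and that Stinespring cannot be invoked directly on an operator system. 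One small misattribution in the last step: the normalization $\tilde{s}(\oper \otimes \oper) = 1$ by itself only gives $\mathrm{Tr}\,\widetilde{\Phi}[\oper] = d$; unitality of $\widetilde{\Phi}$ instead follows from the fact you already established, namely that $\widetilde{\Phi}$ agrees with $\Phi$ on $S$ and $\oper \in S$ with $\Phi[\oper] = \oper$. Also, for completeness you should record the easy forward implication that a positive functional on a unital operator system attains its norm at the unit (via the rotation-to-Hermitian argument), since that is what guarantees $\|s_\Phi\| = 1$ before you apply Hahn--Banach; you allude to the delicate converse but the forward direction is the one you actually use at that point.
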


Actually, if $S$ contains strictly positive operatora $X>0$ and $\Phi : S \to \BH$ is CP, then there exists a CP extension $\widetilde{\Phi} : \BH \to \BH$ of $\Phi$ \cite{Teiko}. Another interesting result was provided in \cite{Jencova}

\begin{Proposition} Consider a CP map $\Phi : \mathcal{M} \to \BH$, where $\mathcal{M}$ is spanned by positive operators (e.g. density operators). Then, $\Phi$ can be extended to a CP map $\widetilde{\Phi} : \BH \to \BH$.
\end{Proposition}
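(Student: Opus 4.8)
The plan is to reduce the statement to the strictly-positive refinement of Arveson's theorem recalled above (the result of \cite{Teiko}) by first compressing everything onto the support of $\mathcal{M}$. Write $\mathcal{M} = {\rm span}_\mathbb{R}\{\rho_1,\ldots,\rho_k\}$ with each $\rho_i \geq 0$, and set $\sigma = \rho_1 + \ldots + \rho_k \in \mathcal{M}$, which is again a positive operator lying in $\mathcal{M}$. Let $P$ be its support projection. Since $\rho_i \geq 0$ forces ${\rm supp}(\rho_i) \subseteq {\rm supp}(\sigma) = P\HH$, every element of $\mathcal{M}$ is supported on $P\HH$; that is, $\mathcal{M} \subseteq \mathcal{B}(P\HH)$ and $PXP = X$ for all $X \in \mathcal{M}$. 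The gain from this passage is that, regarded inside the corner algebra $\mathcal{B}(P\HH)$, the operator $\sigma$ is now \emph{strictly} positive (invertible on $P\HH$), so $\mathcal{M}$ contains a strictly positive operator of $\mathcal{B}(P\HH)$.

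Second, I would invoke the extension theorem of \cite{Teiko} inside the corner. Because $\Phi : \mathcal{M} \to \BH$ is CP and $\mathcal{M} \subseteq \mathcal{B}(P\HH)$ contains the strictly positive operator $\sigma$, there is a CP map $\Phi' : \mathcal{B}(P\HH) \to \BH$ with $\Phi'|_{\mathcal{M}} = \Phi$. Note that the codomain in that theorem is arbitrary, so only the domain changes from $\BH$ to $\mathcal{B}(P\HH)$; the strictly-positive version (rather than plain Arveson) is exactly what is needed here, since $\mathcal{M}$ is self-adjoint but generally non-unital, so it is not an operator system and $\Phi$ is not given on $\oper$.

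Finally, I would recombine with a compression. The map $\mathcal{C}_P : \BH \to \mathcal{B}(P\HH)$, $\mathcal{C}_P[X] = PXP$, is completely positive, so $\widetilde{\Phi} := \Phi' \circ \mathcal{C}_P : \BH \to \BH$ is a composition of CP maps and hence CP. For $X \in \mathcal{M}$ one has $\mathcal{C}_P[X] = PXP = X$, whence $\widetilde{\Phi}[X] = \Phi'[X] = \Phi[X]$, so $\widetilde{\Phi}$ is the required CP extension of $\Phi$ to all of $\BH$.

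The point demanding care, and where the argument rests on more than bookkeeping, is the meaning of complete positivity for a map on the non-unital subspace $\mathcal{M}$: I read it as requiring ${\rm id}_k \otimes \Phi$ to send positive elements of $M_k(\mathbb{C}) \otimes \mathcal{M}$ to positive operators, for every $k$. With that reading, the descent to the corner in the first step is harmless, because any element of $M_k(\mathbb{C}) \otimes \mathcal{M}$ is supported on $\mathbb{C}^k \otimes P\HH$, and positivity of such an operator is intrinsic to its support, hence the same whether tested in $M_k(\mathbb{C}) \otimes \BH$ or in $M_k(\mathbb{C}) \otimes \mathcal{B}(P\HH)$. Consequently the only genuine external input is the strictly-positive extension theorem applied in the corner; the complete positivity of the compression and of the composition, together with the support-projection bookkeeping, is routine.
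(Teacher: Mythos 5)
Your argument is correct. Note first that the paper itself offers no proof of this Proposition: it is imported as a known result from the cited reference (J.\ \cite{Jencova}), immediately after the Arveson theorem and its strictly-positive refinement from \cite{Teiko} are recalled. So there is no in-text proof to compare against; what you have done is supply a derivation from the other extension theorem the paper quotes, and that derivation is sound. The support-compression step is right: with $\sigma=\rho_1+\cdots+\rho_k$ and $P$ its support projection, $\rho_i\le\sigma$ forces ${\rm supp}(\rho_i)\subseteq P\HH$, so $PXP=X$ on $\mathcal{M}$ and $\sigma$ is invertible in the corner $\mathcal{B}(P\HH)$; the strictly-positive Arveson-type theorem (applied with domain algebra $\mathcal{B}(P\HH)$, which is legitimate since the ambient algebra there is arbitrary) gives $\Phi'$; and $\widetilde{\Phi}=\Phi'\circ\mathcal{C}_P$ with $\mathcal{C}_P[X]=PXP$ is CP and restricts correctly. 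Your closing remark on why complete positivity of $\Phi$ on $\mathcal{M}$ is the same notion whether $\mathcal{M}$ is viewed inside $\BH$ or inside $\mathcal{B}(P\HH)$ is exactly the point that needs saying. Two small things worth making explicit if this were written out in full: (i) $\mathcal{M}={\rm span}_{\mathbb{R}}\{\rho_1,\ldots,\rho_k\}$ is only a real subspace, so one should pass to the complexification $\mathcal{M}+i\mathcal{M}$ (a self-adjoint complex subspace) before speaking of CP maps and invoking the extension theorem; and (ii) the strictly-positive extension theorem is itself just Arveson after conjugating by $\sigma^{\pm 1/2}$ to make the subspace unital, so your whole argument could be unwound into a self-contained two-line reduction to Arveson: compress to the support, conjugate to a unital operator system, extend, undo. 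Neither point is a gap, only bookkeeping.
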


However, the above result only guarantees the existence of a CP extension and says nothing about trace-preservation.

\section{Qubit maps}

For $d=2$, we have the following seminal result due to Alberti and Uhlmann  \cite{Alberti}.

\begin{Theorem} Let $\Phi : \mathcal{M}(\{\rho_1,\rho_2\}) \to \BH$ be a trace-preserving contraction. Then, $\Phi$ can be extended to a CPTP map $\widetilde{\Phi} : \BH \to \BH$.
\end{Theorem}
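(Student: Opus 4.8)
The plan is to reduce the statement to the classical monotonicity criterion of Alberti and Uhlmann for the existence of a qubit channel carrying one pair of states to another. Since the restricted map $\Phi$ is completely determined by the two values $\sigma_1 := \Phi[\rho_1]$ and $\sigma_2 := \Phi[\rho_2]$, the whole problem amounts to producing a CPTP map $\widetilde{\Phi}$ on $\BH$ with $\widetilde{\Phi}[\rho_i] = \sigma_i$; by linearity any such map automatically agrees with $\Phi$ on all of $\mathcal{M}(\{\rho_1,\rho_2\})$ and is therefore the sought extension.

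First I would check that $\sigma_1$ and $\sigma_2$ are again density operators. Trace preservation gives $\tr\sigma_i = \tr\rho_i = 1$, while contractivity gives $\|\sigma_i\|_{\rm tr} \leq \|\rho_i\|_{\rm tr} = 1$. Since $|\tr Y| \leq \|Y\|_{\rm tr}$ for every operator $Y$, with equality forcing $Y \geq 0$, the two inequalities combine to yield $\sigma_i \geq 0$. Hence $\sigma_1,\sigma_2$ are bona fide states, and in particular Hermitian.

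Next I would rewrite the contraction hypothesis as the Alberti--Uhlmann inequality. For every $t \in \mathbb{R}$ the operator $\rho_1 - t\rho_2$ lies in $\mathcal{M}(\{\rho_1,\rho_2\})$, and linearity gives $\Phi[\rho_1 - t\rho_2] = \sigma_1 - t\sigma_2$, so the contraction property reads
\begin{equation}
  \| \sigma_1 - t\sigma_2 \|_{\rm tr} \leq \| \rho_1 - t\rho_2 \|_{\rm tr}, \qquad \forall\, t \in \mathbb{R} .
\end{equation}
Conversely, by homogeneity of the trace norm this one-parameter family recovers the full contraction condition $\|\Phi[a\rho_1 + b\rho_2]\|_{\rm tr} \leq \|a\rho_1 + b\rho_2\|_{\rm tr}$ (the remaining case $a=0$, i.e.\ $X \propto \rho_2$, being immediate from $\|\sigma_2\|_{\rm tr}=1$). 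Thus the hypothesis is exactly the scalar monotonicity data in the Alberti--Uhlmann criterion.

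Finally, I would invoke the theorem of Alberti and Uhlmann in its standard form: for a pair of qubit states, the inequality above for all real $t$ is necessary and sufficient for the existence of a CPTP map $\widetilde{\Phi}$ with $\widetilde{\Phi}[\rho_i] = \sigma_i$. The genuine difficulty is concentrated entirely in the sufficiency half of this criterion---reconstructing a completely positive, trace-preserving channel from the scalar monotonicity constraints. I would either cite the original argument of Alberti and Uhlmann, or, working intrinsically with the qubit, exploit the Bloch-ball representation in which the trace-norm inequalities become affine/geometric constraints from which an explicit channel can be built. This geometric construction, special to $d=2$, is the crux of the matter and precisely the point at which the naive higher-dimensional analogue breaks down.
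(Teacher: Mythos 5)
The paper gives no proof of this statement at all---it is quoted verbatim as the Alberti--Uhlmann theorem with a citation to the original work---and your proposal does essentially the same thing: you correctly reduce the extension problem to the standard two-state form of the Alberti--Uhlmann criterion (your preliminary steps showing $\Phi[\rho_i]$ are states and that the contraction hypothesis is equivalent to the one-parameter family of trace-norm inequalities are both sound) and then defer the genuinely hard sufficiency direction to the cited result. So the approach matches the paper's treatment, with your reduction being a correct and slightly more explicit bridge to the form in which Alberti and Uhlmann state their theorem.
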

This result was recently generalized in \cite{Ende,BuscemiQ}. Using the Albert-Uhlmann theorem, one comes to the following conclusion.

\begin{Corollary} If $d=2$, then any PTP map ${\Phi} : \BH \to \BH$ is a 2-partial contraction.
\end{Corollary}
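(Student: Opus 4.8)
The plan is to verify the hypothesis of the extension theorem stated above and then invoke the Alberti--Uhlmann theorem to supply the required CPTP extension. By that extension theorem, it suffices to show that for every linearly independent pair $\{\rho_1,\rho_2\}$ the restricted map $\Phi_\mathcal{M}$ on $\mathcal{M}(\{\rho_1,\rho_2\})$ admits a CPTP extension to all of $\BH$.

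First I would check that $\Phi_\mathcal{M}$ is a trace-preserving contraction in the sense required by Alberti--Uhlmann. Trace preservation is immediate, since $\Phi$ is trace-preserving and $\Phi_\mathcal{M}$ is merely its restriction. For contractivity, I would appeal to Proposition~\ref{PRO-I}: because $\Phi$ is a positive, trace- and Hermiticity-preserving map, one has $\| \Phi[X] \|_{\rm tr} \le \| X \|_{\rm tr}$ for all Hermitian $X$. Since $\mathcal{M}(\{\rho_1,\rho_2\}) = {\rm span}_\mathbb{R}\{\rho_1,\rho_2\} \subset \mathcal{B}_{\rm H}(\HH)$, every operator in the domain of $\Phi_\mathcal{M}$ is Hermitian, so this contraction bound holds on $\mathcal{M}(\{\rho_1,\rho_2\})$.

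With both properties established, the Alberti--Uhlmann theorem --- which applies precisely in dimension $d=2$ to the real span of two density operators --- provides a CPTP extension $\widetilde{\Phi}_\mathcal{M} : \BH \to \BH$ of $\Phi_\mathcal{M}$. Since the pair $\{\rho_1,\rho_2\}$ was arbitrary, the hypothesis of the extension theorem is met for $k=2$, and I would conclude that $\Phi$ is a 2-partial contraction.

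I do not expect a genuine obstacle, as all three ingredients are already available; the argument is essentially a chaining of Proposition~\ref{PRO-I}, the Alberti--Uhlmann theorem, and the extension theorem. The only point requiring care is confirming that the Alberti--Uhlmann hypotheses hold for the restriction on each two-dimensional subspace, which follows directly from positivity and trace preservation of $\Phi$ together with Proposition~\ref{PRO-I}.
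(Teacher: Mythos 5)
Your proposal is correct and follows exactly the route the paper intends: verify via Proposition~\ref{PRO-I} that the restriction of a PTP map to $\mathcal{M}(\{\rho_1,\rho_2\})$ is a trace-preserving contraction, apply the Alberti--Uhlmann theorem to obtain a CPTP extension, and conclude by the extension-implies-contractivity theorem. The paper leaves this chaining implicit (``Using the Albert-Uhlmann theorem, one comes to the following conclusion''), and your write-up simply makes the same argument explicit.
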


Hence, in the qubit case, the hierarchy in (\ref{CCC}) simplifies to
\begin{equation}\label{CCC2}
  {\rm CPTP \ maps} = \mathcal{C}_{4} \subset \mathcal{C}_{3}  \subset \mathcal{C}_2 = \mathcal{C}_1 = \mathcal{P}_1 = {\rm PTP \ maps}.
\end{equation}
Therefore, there exists a class $\mathcal{C}_3$ that interpolates between PTP and CPTP maps. In this section, we analyze $\Phi \in \mathcal{C}_3$.

Consider a triple $\{\rho_1,\rho_2,\rho_3\}$ of linearly independent qubit density operators. From \cite{Sagnik}, we know that
\begin{equation}\label{qubit-3d-basis}
  \mathcal{M}(\{\rho_1,\rho_2,\rho_3\}) = {\rm span}_\mathbb{R} \{UX_1U^\dagger,UX_2U^\dagger,UX_3U^\dagger\},
\end{equation}
where $X_1=\sigma_1$, $X_2=\sigma_2$, $X_3=\mathrm{diag}(p,1-p)$ for some $p \in(0,1)$, and $U$ is a unitary operator depending on the choice of $\{\rho_1,\rho_2,\rho_3\}$.
From Definition \ref{def}, $\Phi$ is 3-partially contractive if and only if
\begin{equation}\label{A}
\left\|\sum_{k=1}^3 A_k \otimes \Phi[\rho_k]\right\|_{\rm tr}\leq
\left\|\sum_{k=1}^3 A_k \otimes \rho_k\right\|_{\rm tr}
\end{equation}
for all possible choices of $\{\rho_1,\rho_2,\rho_3\}$ and Hermitian qubit operators $\{A_1,A_2,A_3\}$.
Using eq. (\ref{qubit-3d-basis}), we see that for any set of $\{A_1,A_2,A_3\}$ there exists another set of Hermitian operators $\{B_1,B_2,B_3\}$ such that
\begin{equation}\label{AB}
\sum_{k=1}^3 A_k \otimes \rho_k=\sum_{k=1}^3 B_k \otimes UX_kU^{\dagger}.
\end{equation}
Now, consider a special class of qubit maps with the following property: for any unitary operator $U$, there exists a unitary operator $V$ such that
\begin{equation}\label{cov}
\Phi[UXU^\dagger] =  V\Phi[X]V^\dagger
\end{equation}
for any $X\in\BH$.

\begin{Lemma}\label{3S} If a trace-preserving qubit map $\Phi$ satisfies eq. (\ref{cov}), as well as
\begin{equation}\label{B}
 \left\| \sum_{k=1}^3 B_k \otimes \Phi[X_k] \right\|_{\rm tr}\leq \left\| \sum_{k=1}^3 B_k \otimes X_k \right\|_{\rm tr}
\end{equation}
for any $p\in (0,1)$ and all Hermitian $\{B_1,B_2,B_3\}$, then $\Phi \in \mathcal{C}_3$.
\end{Lemma}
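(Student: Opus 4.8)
The plan is to verify the defining inequality (\ref{A}) directly, by transporting an arbitrary admissible triple to the canonical representatives $\{X_1,X_2,X_3\}$ using the structural identity (\ref{qubit-3d-basis}) together with the covariance (\ref{cov}). I would fix an arbitrary linearly independent triple $\{\rho_1,\rho_2,\rho_3\}$ and arbitrary Hermitian $\{A_1,A_2,A_3\}$, so that proving (\ref{A}) for this data establishes $\Phi\in\mathcal{C}_3$. By (\ref{qubit-3d-basis}) there are a unitary $U$ and a parameter $p\in(0,1)$ with $\mathcal{M}(\{\rho_1,\rho_2,\rho_3\})=\mathrm{span}_\mathbb{R}\{UX_1U^\dagger,UX_2U^\dagger,UX_3U^\dagger\}$, and by (\ref{AB}) there exist $\{B_1,B_2,B_3\}$ satisfying $\sum_k A_k\otimes\rho_k=\sum_k B_k\otimes UX_kU^\dagger$. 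I would first note that these $B_k$ are again Hermitian: since $\{\rho_k\}$ and $\{UX_kU^\dagger\}$ are two real bases of the same real subspace $\mathcal{M}$, the change of basis is real and the $B_k$ are real combinations of the Hermitian $A_k$; this is what allows (\ref{B}) to be applied to them.

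The core of the argument is to rewrite both sides of (\ref{A}) as trace norms of operators built from the fixed $X_k$. On the right, factoring the unitary out of (\ref{AB}) gives $\sum_k A_k\otimes\rho_k=(\oper\otimes U)\big(\sum_k B_k\otimes X_k\big)(\oper\otimes U^\dagger)$, so invariance of the trace norm under unitary conjugation yields $\|\sum_k A_k\otimes\rho_k\|_{\rm tr}=\|\sum_k B_k\otimes X_k\|_{\rm tr}$. On the left, applying the linear map $\mathrm{id}\otimes\Phi$ to both representations in (\ref{AB}) gives $\sum_k A_k\otimes\Phi[\rho_k]=\sum_k B_k\otimes\Phi[UX_kU^\dagger]$; now covariance (\ref{cov}) supplies a unitary $V$ with $\Phi[UX_kU^\dagger]=V\Phi[X_k]V^\dagger$, so $\sum_k A_k\otimes\Phi[\rho_k]=(\oper\otimes V)\big(\sum_k B_k\otimes\Phi[X_k]\big)(\oper\otimes V^\dagger)$ and hence $\|\sum_k A_k\otimes\Phi[\rho_k]\|_{\rm tr}=\|\sum_k B_k\otimes\Phi[X_k]\|_{\rm tr}$. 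Finally, hypothesis (\ref{B}) applied with this particular $p$ gives $\|\sum_k B_k\otimes\Phi[X_k]\|_{\rm tr}\le\|\sum_k B_k\otimes X_k\|_{\rm tr}$, and chaining the three relations reproduces (\ref{A}).

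The computation is routine; the conceptual crux is the covariance step. What makes it go through is that the companion unitary $V$ need not coincide with $U$: we never need the output operators to match, only their trace norms, and invariance of $\|\cdot\|_{\rm tr}$ under unitary conjugation absorbs $V$ exactly as it absorbs $U$ on the input side. The other point deserving care is the quantification. The canonical form (\ref{qubit-3d-basis}) produces a $U$ and a $p$ that both depend on the chosen triple, so it is essential that covariance holds for \emph{every} unitary and that (\ref{B}) is assumed for \emph{all} $p\in(0,1)$. These two universal clauses are precisely what let a single inequality over the fixed representatives $\{X_1,X_2,X_3\}$ cover the whole continuum of admissible three-dimensional subspaces.
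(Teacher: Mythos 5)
Your proposal is correct and follows essentially the same route as the paper's own proof: reduce (\ref{A}) to (\ref{B}) via the representation (\ref{AB}), use unitary invariance of the trace norm to strip $U$ on the input side and the covariance unitary $V$ on the output side, then invoke (\ref{B}) for the $p$ attached to the chosen triple. Your explicit check that the $B_k$ remain Hermitian (via the real change of basis on $\mathcal{M}$) is a detail the paper leaves implicit, but it does not change the argument.
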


\begin{proof}
It is enough to show that conditions (\ref{A}) and (\ref{B}) are equivalent if $\Phi$ satisfies eq. (\ref{cov}).
Using eq. (\ref{AB}), one has
\begin{equation}
\left\|\sum_{k=1}^3 A_k \otimes \rho_k\right\|_{\rm tr}=\left\|\sum_{k=1}^3 B_k \otimes X_k\right\|_{\rm tr}.
\end{equation}
Now, for a positive, trace-preserving map $\Phi$ satisfying eq. (\ref{cov}), it follows that
\begin{equation}
\left\|\sum_{k=1}^3A_k\otimes\Phi[\rho_k]\right\|_{\rm tr}
=\left\|\sum_{k=1}^3B_k\otimes \Phi[UX_kU^\dagger]\right\|_{\rm tr}
=\left\|\sum_{k=1}^3B_k\otimes V\Phi[X_k]V^\dagger\right\|_{\rm tr}
=\left\|\sum_{k=1}^3B_k\otimes \Phi[X_k]\right\|_{\rm tr}.
\end{equation}
\end{proof}

\begin{Example} Let us propose an example of a positive map that is not 3-partially contractive. Consider the transposition map
\begin{equation}
T[X]=X^T.
\end{equation}
It can be easily seen that $T$ satisfies condition (\ref{cov}). Now, take a set of three Hermitian operators,
\begin{equation}
B_1=\begin{bmatrix}
0 & 1 \\
1 & 0
\end{bmatrix}, \,\,\,
B_2=\begin{bmatrix}
0 & i \\
-i & 0
\end{bmatrix}, \,\,\,
B_3=\begin{bmatrix}
1 & 0 \\
0 & 0
\end{bmatrix}.
\end{equation}
The corresponding trace norms read
\begin{equation}
\left\|\sum_{k=1}^3 B_k \otimes X_k \right\|_{\rm tr}=\begin{Vmatrix}
p & 0 & 0 & 0 \\
0 & 1-p & 2 & 0 \\
0 & 2 & 0 & 0\\
0 & 0 & 0 & 0
\end{Vmatrix}_{tr}=p+\sqrt{(1-p)^2+16}
\end{equation}
and
\begin{equation}
\left\|\sum_{k=1}^3 B_k \otimes T[X_k] \right\|_{\rm tr}=\begin{Vmatrix}
p & 0 & 0 & 2 \\
0 & 1-p & 0 & 0 \\
0 & 0 & 0 & 0\\
2 & 0 & 0 & 0
\end{Vmatrix}_{tr}=1-p+\sqrt{p^2+16}.
\end{equation}
Hence, we show that $T$ violates condition (\ref{B}) for $ p > \frac 12$ and thus, from Lemma \ref{3S}, we conclude that $T\notin\mathcal{C}_3$.
\end{Example}

\begin{Example} Consider the qubit map
\begin{equation}   \label{Lam}
\Lambda_a[X] = \frac{1}{2-a}\left(\oper {\rm Tr}X - a X\right),
\end{equation}
which is trace-preserving and positive if and only if $0\leq a\leq 1$. Let us show when $\Lambda_a$ is 3-partially contractive but not completely positive. Observe that $0\leq a\leq 1$ gives the primary constraint for $a$, as 3-partially contractive maps are necessarily positive. The map restricted to the subspace $\mathcal{M}$ reads
\begin{equation}
\Lambda_a|_{\mathcal{M}}[X_1]=-\frac{a}{2-a}X_1,\qquad \Lambda_a|_{\mathcal{M}}[X_2]=-\frac{a}{2-a}X_2,\qquad \Lambda_a|_{\mathcal{M}}[X_3]=\frac{1}{2-a}(\oper-a X_3).
\end{equation}
Now, we extend $\Lambda_a|_{\mathcal{M}}$ to $\widetilde{\Lambda}_a:\mathcal{B}(\mathcal{H})\to\mathcal{B}(\mathcal{H})$ in the following way,
\begin{equation}
\widetilde{\Lambda}_a[X_k]=\Lambda_a|_{\mathcal{M}}[X_k],\quad k=1,2,3,\qquad \widetilde{\Lambda}_a[X_4]=r\sigma_3,
\end{equation}
$\sigma_3=\mathrm{diag}(1,-1)$, which guarantees the trace preservation of $\widetilde{\Lambda}_a$.
The complete positivity of the extension is equivalent to the positivity of its Choi matrix \cite{Choi}
\begin{equation}
C(\widetilde{\Lambda}_a)=\frac 12 \sum_{j,k=0}^1|j\>\<k|\otimes\Lambda[|j\>\<k|]
=\frac{1}{2(2-a)}
\begin{pmatrix}
c(p) & 0 & 0 & -a \\
0 & b(1-p) & 0 & 0 \\
0 & 0 & b(p) & 0 \\
-a & 0 & 0 & c(1-p)
\end{pmatrix},
\end{equation}
where
\begin{equation}
c(p)=1-pa+r(1-p)(2-a),\qquad b(p)=1-pa-rp(2-a).
\end{equation}
From Sylvester's criterion \cite{Gilbert}, $C(\widetilde{\Lambda}_a)\geq 0$ if and only if all its minors have positive determinants. This translates to the condition that there exists a real number $r$ such that
\begin{equation}\label{21}
c(p)\geq 0,\qquad b(p)\geq 0,\qquad \det A(p)=
\det\begin{pmatrix}
c(p) & -a \\
-a & c(1-p)
\end{pmatrix}\geq 0
\end{equation}
for all $0\leq p\leq 1$. From the first two inequalities, one gets
\begin{equation}
-\frac{1}{2-a}\leq r\leq \frac{1-a}{2-a}.
\end{equation}
What remains is the condition for the determinant of $A(p)$.
For the allowed range of $p$, the only local extrema of $\det A(p)$ are maxima, as one has
\begin{equation}
\frac{\partial^2\det A(p)}{\partial p^2}=-2(ra-a-2r)^2<0.
\end{equation}
Therefore, the minimal value of $\det A(p)$ is achieved at the end points, $p=0$ and $p=1$. The sufficient condition for the positivity of $\det A(p)$ is the positivity of $\det A(0)$, where
\begin{equation}
\det A(0)=\det A(1)=(1-a)[1+r(2-a)]-a^2.
\end{equation}
The last inequality of eq. (\ref{21}) gives the new upper bound on $r$,
\begin{equation}
r\geq\frac{a^2+a-1}{a^2-3a+2}.
\end{equation}
Thus, there exists a parameter $r$ such that $\widetilde{\Lambda}_a[X_3]=r\sigma_3$ whenever
\begin{equation}
\frac{a^2+a-1}{a^2-3a+2}\leq\frac{1-a}{2-a},
\end{equation}
which is satisfied for $a\in[0,1]$ if and only if
\begin{equation}
0\leq a\leq \frac 23.
\end{equation}
This way, we arrive at the 3-partial contractivity condition for $\Lambda_a$. Finally, $\Lambda_a$ is completely positive if and only if $0\leq a\leq 1/2$. Hence, it is 3-partially contractive but not completely positive for
\begin{equation}
\frac 12<a\leq \frac 23.
\end{equation}
\end{Example}

\begin{Example}
Now, consider another trace-preserving qubit map,
\begin{equation}   \label{Om}
\Omega_\epsilon[X] = \frac{\epsilon}{2} \oper {\rm Tr}X +(1-\epsilon) X^T.
\end{equation}
This map is positive if and only if $0\leq\epsilon\leq 1$ and completely positive if and only if $2/3\leq\epsilon\leq 1$. Using the same method as in the previous example, one shows that $\Omega_\epsilon$ is 3-partially contractive but not completely positive for
\begin{equation}
\frac 12\leq\epsilon<\frac 23.
\end{equation}
\end{Example}

\section{Partial contractivity vs. Schwarz qubit maps}

A positive map $\Phi : \BH \to \BH$ is called the {\it Schwarz map} if for any $X \in \BH$,
\begin{equation}\label{Smap}
  \| \Phi(\oper)\|_\infty \Phi[X^\dagger X]\geq \Phi[X^\dagger]\Phi[X] ,
\end{equation} 
where $\| A \|_\infty$ stands for the operator norm. Any Schwarz map satisfies $\|\Phi\|_\infty =  \| \Phi(\oper)\|_\infty$, where
\begin{equation}\label{}
  \|\Phi\|_\infty := \sup_{X \in \BH} \frac{\| \Phi(X)\|_\infty}{\| X \|_\infty} .
\end{equation}
If $\Phi$ is unital ($\Phi[\oper]=\oper$), then eq. (\ref{Smap}) reduces to \cite{Paulsen,Bhatia,Ringrose}
\begin{equation}\label{KS}
\Phi[X^\dagger X]\geq \Phi[X^\dagger]\Phi[X] .
\end{equation}
Note that condition (\ref{Smap}) is sufficient for positivity and necessary for complete positivity. A composition  $\Phi_1\circ\Phi_2$ and a convex combination $q\Phi_1+(1-q)\Phi_2$ of two unital Schwarz maps is also a unital Schwarz map. It was proved by Kadison that any positive unital map satisfies eq. (\ref{KS}) for Hermitian $X$ (the celebrated Kadison inequality \cite{Kadison}).  If $\Phi$ is not unital but $V=\Phi[\oper]>0$, then $\Psi(X) := V^{-1/2} \Phi[X] V^{-1/2}$ is inital and $\Psi$ is a Schwarz map if and only if
\begin{equation}
\Phi[X^\dagger X]\geq \Phi[X^\dagger]\Phi[\oper]^{-1}\Phi[X] \geq \frac{1}{ \|\Phi\|_\infty} \Phi[X^\dagger]\Phi[X] ,
\end{equation}
for arbitrary $X\in\mathcal{B}(\mathcal{H})$. Hence, $\Phi$ satisfies (\ref{Smap}). Now, for PTP unital qubit maps, one finds the following hierarchy,
\begin{equation}\label{relationKS}
{\rm CPTP\ unital\ maps} \subset S \subset {\rm PTP\ unital\ maps} ,
\end{equation}
where $S$ denotes the set of Schwarz maps. After comparing eq. (\ref{CCC2}) with eq. (\ref{relationKS}), it would be interesting to analyze the relation between two classes of maps: 3-partially contractive and Schwarz maps.

A method of constructing the Schwarz maps was proposed in \cite{KSApprox}. Take a positive, trace-preserving, unital map $\Psi:\mathcal{B}(\mathcal{H})\to\mathcal{B}(\mathcal{H})$ that is a contraction  $\|\Psi[X]\|_2\leq\|X\|_2$ in the Hilbert-Schmidt (Frobenius) norm,
where $\|X\|_2=\sqrt{\Tr (X^\dagger X)}$. Define the qubit map
\begin{equation}
\Phi[X]=\frac{q}{2}\oper\Tr X+(1-q)\Psi[X].
\end{equation}
Now, $\Phi$ is a Schwarz map if
\begin{equation}
\frac 12 \leq q \leq \frac 32.
\end{equation}
The following stronger results were also proved.

\begin{Proposition}[\cite{KSApprox}]
The map $\Lambda_a$ defined in eq. (\ref{Lam}) is the Schwarz map (that is not CP) if and only if $\frac 12< a \leq \frac 23$.
\end{Proposition}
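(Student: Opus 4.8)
The plan is to prove that $\Lambda_a$ is a Schwarz map (not CP) if and only if $\tfrac12 < a \leq \tfrac23$, by directly testing the defining Schwarz inequality \eqref{Smap}. Since $\Lambda_a$ is not unital (indeed $\Lambda_a[\oper] = \tfrac{2-2a}{2-a}\oper$, so $\Lambda_a[\oper] = \tfrac{2(1-a)}{2-a}\oper > 0$ for $a<1$), I first note that $\Lambda_a[\oper]$ is a positive multiple of the identity, which greatly simplifies matters: the general (non-unital) Schwarz condition \eqref{Smap} becomes $\|\Lambda_a[\oper]\|_\infty\,\Lambda_a[X^\dagger X] \geq \Lambda_a[X^\dagger]\Lambda_a[X]$ with $\|\Lambda_a[\oper]\|_\infty = \tfrac{2(1-a)}{2-a}$. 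Equivalently, I would rescale to the unital map $\Psi[X] := \Lambda_a[\oper]^{-1/2}\Lambda_a[X]\Lambda_a[\oper]^{-1/2}$ and verify the Kadison--Schwarz inequality \eqref{KS} for $\Psi$, using that (as recalled in the text) $\Phi$ is Schwarz iff $\Psi$ is.

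Next I would reduce the inequality to a concrete matrix condition. Writing a general qubit operator as $X = x_0\oper + \mathbf{x}\cdot\boldsymbol\sigma$ (with $\mathbf x \in \mathbb{C}^3$ for non-Hermitian $X$), I compute both $\Lambda_a[X^\dagger X]$ and $\Lambda_a[X^\dagger]\Lambda_a[X]$ explicitly using the linearity of $\Lambda_a$ and the simple action $\Lambda_a[\sigma_j] = -\tfrac{a}{2-a}\sigma_j$, $\Lambda_a[\oper] = \tfrac{2(1-a)}{2-a}\oper$. The Schwarz condition then reduces to requiring that a certain $2\times 2$ Hermitian operator-valued expression, quadratic in the components of $\mathbf x$, be positive semidefinite for all $X$. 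Homogeneity lets me restrict to a normalized family of $X$, turning the problem into an inequality among the scalar parameters that must hold uniformly. I expect the extremal $X$ to be the one that saturates the inequality, and the boundary $a = \tfrac23$ should emerge exactly there.

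For the boundaries I would argue each direction separately. The lower bound $a > \tfrac12$ follows from the stated fact that \eqref{Smap} is necessary for complete positivity together with the established result (Example~2) that $\Lambda_a$ is CP precisely for $0\leq a\leq\tfrac12$: a Schwarz map at $a\leq\tfrac12$ would coincide with the CP regime, so to be Schwarz-but-not-CP one needs $a>\tfrac12$. The upper bound $a\leq\tfrac23$ is the substantive analytic claim and comes from the positive-semidefiniteness computation: I would identify the worst-case $X$ (the one minimizing the smallest eigenvalue of the operator difference) and show the resulting scalar inequality holds iff $a\leq\tfrac23$, with equality giving the threshold.

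The main obstacle will be the second step: organizing the computation of $\Lambda_a[X^\dagger]\Lambda_a[X]$ for general non-Hermitian $X$ and extracting the sharp eigenvalue condition without losing the extremal case. Because the Schwarz inequality must hold for \emph{all} $X$ (not merely Hermitian ones, unlike the Kadison inequality), I cannot restrict to the Bloch-real slice, and the cross terms between distinct Pauli components must be tracked carefully. The cleanest route is likely to exploit the covariance $\Lambda_a[UXU^\dagger] = U\Lambda_a[X]U^\dagger$ (which holds since $\Lambda_a$ commutes with unitary conjugation up to the depolarizing structure) to diagonalize or normalize $X$ before testing, thereby shrinking the space of candidate extremizers and making the threshold $a=\tfrac23$ transparent.
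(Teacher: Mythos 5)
The paper offers no proof of this Proposition --- it is imported verbatim from Ref.~\cite{KSApprox} --- so your proposal must stand on its own, and it contains a consequential computational error at the very first step. You assert that $\Lambda_a[\oper]=\tfrac{2(1-a)}{2-a}\oper$, but from eq.~(\ref{Lam}) one has $\Lambda_a[\oper]=\tfrac{1}{2-a}(\oper\,\Tr\oper-a\oper)=\tfrac{2-a}{2-a}\oper=\oper$: the map is \emph{unital}, so the relevant condition is the plain Kadison--Schwarz inequality (\ref{KS}) with no prefactor and no rescaling. This is not a cosmetic slip. Carrying your prefactor $\|\Lambda_a[\oper]\|_\infty=\tfrac{2(1-a)}{2-a}$ through the computation changes the threshold: for the test operator $X=|0\>\<1|$ one finds
\begin{equation*}
\Lambda_a[X^\dagger X]=\tfrac{1}{2-a}\,\mathrm{diag}(1,1-a),\qquad
\Lambda_a[X^\dagger]\Lambda_a[X]=\tfrac{a^2}{(2-a)^2}\,|1\>\<1|,
\end{equation*}
so the correct (unital) condition reads $(1-a)(2-a)\geq a^2$, i.e.\ $a\leq\tfrac23$, whereas your version would demand $2(1-a)^2\geq a^2$, i.e.\ $a\leq 2-\sqrt2\approx 0.586$. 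Your plan would therefore land on the wrong interval.

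Beyond that, the proposal defers the entire substantive content. The lower bound $a>\tfrac12$ is indeed just the ``not CP'' clause combined with the known CP range, and your observation that one cannot restrict to Hermitian $X$ is correct and important (the extremal operator above is nilpotent, hence non-normal, so unitary covariance alone will not reduce you to a diagonal slice). But the if-and-only-if claim requires showing that for \emph{every} $X=x_0\oper+\mathbf{x}\cdot\boldsymbol\sigma$ with $\mathbf{x}\in\mathbb{C}^3$ the operator $\Lambda_a[X^\dagger X]-\Lambda_a[X^\dagger]\Lambda_a[X]$ is positive semidefinite when $a\leq\tfrac23$, and you only promise to ``identify the worst-case $X$'' without exhibiting it or proving it is extremal. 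As it stands the proposal establishes neither direction of the equivalence and, because of the unitality error, would not converge to the stated bound even if completed.
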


\begin{Proposition}[\cite{KSApprox}]
The map $\Omega_\epsilon$ defined in eq. (\ref{Om}) is the Schwarz map (that is not CP) if and only if $\frac 12\leq\epsilon<\frac 23$.
\end{Proposition}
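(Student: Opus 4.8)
The plan is to exploit that $\Omega_\epsilon$ is unital and trace-preserving, so that the Schwarz condition collapses to a single matrix inequality. First I would check that $\Omega_\epsilon[\oper]=\epsilon\oper+(1-\epsilon)\oper=\oper$, so by eq.~(\ref{KS}) the map is Schwarz if and only if the Kadison--Schwarz operator $\mathcal{K}(X):=\Omega_\epsilon[X^\dagger X]-\Omega_\epsilon[X^\dagger]\Omega_\epsilon[X]$ is positive semidefinite for every $X\in\BH$. Since $\mathcal{K}(X)$ is a Hermitian $2\times 2$ matrix, positivity is equivalent to $\Tr\mathcal{K}(X)\ge 0$ together with $\det\mathcal{K}(X)\ge 0$. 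Using trace preservation and Hermiticity preservation one has $\Tr\mathcal{K}(X)=\|X\|_2^2-\|\Omega_\epsilon[X]\|_2^2$, which is nonnegative precisely when $\Omega_\epsilon$ is a contraction in the Hilbert--Schmidt norm; a direct computation on the Pauli basis (the traceless sector is scaled by $|1-\epsilon|$, the trace sector is preserved) shows this holds for all $0\le\epsilon\le 1$. Hence the trace condition is automatic in the positivity range, and the whole problem reduces to the determinant inequality $\det\mathcal{K}(X)\ge 0$.

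For sufficiency I would observe that $\Omega_\epsilon$ is exactly an instance of the construction recalled just above eq.~(\ref{Om}), with $q=\epsilon$ and $\Psi$ equal to the transposition $T[X]=X^T$. The map $T$ is positive, trace-preserving, unital, and an isometry (hence a contraction) in the Hilbert--Schmidt norm, so the cited construction guarantees that $\Omega_\epsilon$ is a Schwarz map whenever $\tfrac12\le\epsilon\le\tfrac32$. Intersecting with the positivity range $0\le\epsilon\le 1$ then yields the Schwarz property for all $\tfrac12\le\epsilon\le 1$.

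For necessity it suffices to exhibit a single violating $X$. I would take the elementary matrix $X=\bigl(\begin{smallmatrix}0&1\\0&0\end{smallmatrix}\bigr)$, so that $X^\dagger X=\bigl(\begin{smallmatrix}0&0\\0&1\end{smallmatrix}\bigr)$ and $\Omega_\epsilon[X]=(1-\epsilon)\bigl(\begin{smallmatrix}0&0\\1&0\end{smallmatrix}\bigr)$. A short computation gives $\mathcal{K}(X)=\mathrm{diag}\!\left(\tfrac{\epsilon}{2}-(1-\epsilon)^2,\;1-\tfrac{\epsilon}{2}\right)$, whose upper-left entry factorizes as $-(\epsilon-\tfrac12)(\epsilon-2)$ and is therefore strictly negative for $\epsilon<\tfrac12$. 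Thus $\Omega_\epsilon$ fails the Kadison--Schwarz inequality, and is not a Schwarz map, for $0\le\epsilon<\tfrac12$ (and trivially for $\epsilon<0$, where it is not even positive). Combining the two directions, $\Omega_\epsilon$ is Schwarz exactly for $\tfrac12\le\epsilon\le 1$; removing the completely positive range $\tfrac23\le\epsilon\le 1$ noted after eq.~(\ref{Om}) leaves the claimed window $\tfrac12\le\epsilon<\tfrac23$.

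The main obstacle lies in making the sufficiency self-contained, i.e.\ in proving $\det\mathcal{K}(X)\ge 0$ for all $X$ without invoking the external construction. Here I would use the covariance (\ref{cov}), which for $\Omega_\epsilon$ reads $\Omega_\epsilon[UXU^\dagger]=\overline{U}\,\Omega_\epsilon[X]\,\overline{U}^\dagger$ and implies $\mathcal{K}(UXU^\dagger)=\overline{U}\,\mathcal{K}(X)\,\overline{U}^\dagger$, so that positivity of $\mathcal{K}$ is invariant under unitary conjugation; together with invariance under $X\mapsto e^{i\phi}X$ and homogeneity under scaling, this lets me bring $X$ to an upper-triangular (Schur) normal form with a real off-diagonal entry. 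The remaining task is a determinant optimization in a few real parameters, whose minimum I expect to be attained at the elementary matrix above, pinning the threshold at $\epsilon=\tfrac12$. Verifying that this elementary matrix is genuinely the extremal case — rather than merely one convenient witness — is the technically delicate step.
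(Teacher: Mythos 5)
The paper does not actually prove this proposition: it is imported verbatim from \cite{KSApprox} and stated without argument, so there is no internal proof to compare against. Your reconstruction is correct and is the natural one. For sufficiency you instantiate the construction recalled immediately above (the map $\Phi[X]=\frac{q}{2}\oper\Tr X+(1-q)\Psi[X]$ with $\Psi$ a positive, trace-preserving, unital Hilbert--Schmidt contraction) with $\Psi=T$ and $q=\epsilon$; since transposition is a positive unital TP isometry in the Hilbert--Schmidt norm, this legitimately yields the Schwarz property on $[\tfrac12,1]$. For necessity your witness is sound: with $X=|0\>\<1|$ one indeed gets $\mathcal{K}(X)=\mathrm{diag}\bigl(\tfrac{\epsilon}{2}-(1-\epsilon)^2,\,1-\tfrac{\epsilon}{2}\bigr)$, whose first entry $-(\epsilon-\tfrac12)(\epsilon-2)$ is negative exactly for $\epsilon<\tfrac12$ (within the positivity window), and combining with the CP range $\tfrac23\leq\epsilon\leq1$ quoted after eq.~(\ref{Om}) gives precisely the stated interval. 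Two remarks: first, your closing paragraph about bringing $X$ to Schur normal form and optimizing the determinant is not needed --- the construction lemma already closes sufficiency, so you should not present the ``self-contained'' determinant optimization as an outstanding obstacle to the proof; it is only an obstacle to an alternative, citation-free proof. Second, the ``only if'' at the upper end tacitly relies on the paper's restriction of the parameter to $0\leq\epsilon\leq1$ (the construction lemma by itself would certify the Schwarz property up to $q=\tfrac32$), so it is worth stating explicitly that you work within the domain the paper fixes for $\Omega_\epsilon$. Neither point is a gap in correctness.
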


At this point, we make an important observation: the sufficient condition for 3-partial contractivity and the necessary and sufficient conditions for Schwarz maps coincide for $\Lambda_\epsilon$ and $\Omega_\epsilon$. In other words, the following relations hold,
\begin{equation}
{\rm CPTP\ unital\ maps} \subset \mathcal{C}_3\subset S \subset {\rm PTP\ unital\ maps}.
\end{equation}
Therefore, one observes an intricate connection between the Schwarz maps and 3-partially contractive maps. This problem deserves further analysis.

\section{Partial contractions vs. quantum entanglement}

Any state vector $\psi \in \HH \otimes \HH$ gives rise to the Schmidt decomposition
\begin{equation}\label{}
  |\psi\> = \sum_{i=1}^r s_i |e_i\> \otimes |f_i\> ,
\end{equation}
where the Schmidt rank $r={\rm SR}(\psi)$ satisfies $1 \leq r \leq d$.  This concept can be easily generalized to density operators \cite{Terhal}: given $\rho$, one defines its Schmidt number
\begin{equation}\label{}
  {\rm SN}(\rho) = \min_{\{p_k,\psi_k\}} \max_k {\rm SR}(\psi_k),
\end{equation}
where the minimization is carried over all pure state decompositions $\rho = \sum_k p_k |\psi_k\>\<\psi_k|$. If $\rho = |\psi\>\<\psi|$, then ${\rm SN}(\rho) = {\rm SR}(\psi)$.

\begin{Proposition}[\cite{TOPICAL}] A map $\Phi : \BH \to \BH$ is $k$-positive if and only if
\begin{equation}\label{}
   ({\rm id} \otimes \Phi)[\rho] \geq 0
\end{equation}
for any $\rho \in \mathcal{B}(\HH \otimes \HH)$ with ${\rm SN}(\rho) \leq k$.
\end{Proposition}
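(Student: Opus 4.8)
The plan is to prove both implications by reducing positivity of the two maps in question to positivity on rank-one projectors, and then to exploit the elementary geometric fact that a vector of Schmidt rank at most $k$ is supported, on the first factor, in a subspace of dimension at most $k$. Throughout I take $1\leq k\leq d$ (for $k\geq d$ both sides reduce to complete positivity) and fix an isometry $\iota:\mathbb{C}^k\to\HH$ onto such a subspace, with $W:=\iota\otimes\oper$. The single computational fact underlying everything is the intertwining identity $({\rm id}\otimes\Phi)[W\sigma W^\dagger]=W\,({\rm id}_k\otimes\Phi)[\sigma]\,W^\dagger$, valid for any $\sigma$ on $\mathbb{C}^k\otimes\HH$; it holds because $\Phi$ acts only on the second tensor factor while $W$ acts only on the first, so the two operations commute.

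For the forward direction, assume $\Phi$ is $k$-positive and take $\rho$ with ${\rm SN}(\rho)\leq k$. By definition of the Schmidt number there is a decomposition $\rho=\sum_j p_j |\psi_j\rangle\langle\psi_j|$ with $p_j\geq 0$ and ${\rm SR}(\psi_j)\leq k$. Since ${\rm id}\otimes\Phi$ is linear and positivity is preserved under nonnegative combinations, it suffices to treat a single $\psi$ with Schmidt decomposition $|\psi\rangle=\sum_{i=1}^k s_i|e_i\rangle\otimes|f_i\rangle$. Its first-factor support lies in the $k$-dimensional subspace spanned by $\{|e_1\rangle,\dots,|e_k\rangle\}$, so I may write $|\psi\rangle=W|\tilde\psi\rangle$ with $|\tilde\psi\rangle\in\mathbb{C}^k\otimes\HH$. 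The intertwining identity then gives $({\rm id}\otimes\Phi)[|\psi\rangle\langle\psi|]=W\,({\rm id}_k\otimes\Phi)[|\tilde\psi\rangle\langle\tilde\psi|]\,W^\dagger$, and the inner operator is positive by $k$-positivity; conjugation by $W$ preserves positivity, which yields the claim.

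For the converse, assume $({\rm id}\otimes\Phi)[\rho]\geq 0$ for every $\rho$ with ${\rm SN}(\rho)\leq k$, and take an arbitrary $P\geq 0$ in $M_k(\mathbb{C})\otimes\BH$, i.e.\ a positive operator on $\mathbb{C}^k\otimes\HH$. Spectrally decompose $P=\sum_j|\chi_j\rangle\langle\chi_j|$ with $|\chi_j\rangle\in\mathbb{C}^k\otimes\HH$. Each vector $W|\chi_j\rangle\in\HH\otimes\HH$ has first-factor support of dimension at most $k$, hence Schmidt rank at most $k$, so the projector $W|\chi_j\rangle\langle\chi_j|W^\dagger$ has Schmidt number at most $k$. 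By hypothesis its image under ${\rm id}\otimes\Phi$ is positive; invoking the intertwining identity in the form $({\rm id}\otimes\Phi)[W|\chi_j\rangle\langle\chi_j|W^\dagger]=W\,({\rm id}_k\otimes\Phi)[|\chi_j\rangle\langle\chi_j|]\,W^\dagger$ and pulling back through the isometry (using $W^\dagger(WYW^\dagger)W=Y$, so that $WYW^\dagger\geq 0$ forces $Y\geq 0$), I obtain $({\rm id}_k\otimes\Phi)[|\chi_j\rangle\langle\chi_j|]\geq 0$. Summing over $j$ gives $({\rm id}_k\otimes\Phi)[P]\geq 0$, i.e.\ $k$-positivity of $\Phi$.

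The linearity and convexity reductions, together with the spectral decompositions, are routine. The point deserving care — the main obstacle — is the bookkeeping of the embedding $\iota:\mathbb{C}^k\hookrightarrow\HH$ and the verification of the intertwining identity, which is precisely what lets positivity transfer cleanly in both directions; one must also confirm the elementary fact that a vector with $k$-dimensional first-factor support has Schmidt rank at most $k$, which is immediate from the Schmidt decomposition.
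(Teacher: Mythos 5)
Your proof is correct; the isometry bookkeeping via $W=\iota\otimes\oper$ and the intertwining identity is exactly the standard Terhal--Horodecki argument. The paper itself offers no proof of this Proposition---it is quoted from the cited reference---so there is nothing internal to compare against; your self-contained argument fills that gap correctly, with only the trivial caveat that the unnormalized rank-one operators $W|\chi_j\rangle\langle\chi_j|W^\dagger$ should be rescaled to density operators before invoking the Schmidt-number hypothesis (positivity being invariant under positive scaling).
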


This results allows us to provide the following classification of entangled states in $\HH \otimes \HH$,
\begin{equation}\label{EEE}
  {\rm separabale \ states} = \mathbb{E}_1 \subset \mathbb{E}_2 \subset \ldots \subset \mathbb{E}_{d-1} \subset \mathbb{E}_d = {\rm all \ states} ,
\end{equation}
where $\mathbb{E}_k$ contains all states with ${\rm SN} \leq k$. The hierarchy in (\ref{EEE}) is dual to (\ref{PPP}).

\begin{Example} \label{k-P} Consider a map $\Phi : \BH \to \BH$ defined by
\begin{equation}\label{Phi-k}
  \Phi_p[X] = p\, \oper_d {\rm Tr} X - X ,
\end{equation}
with $p > 0$. Choi showed that $\Phi_p$ is $k$-positive but not $(k+1)$-positive if and only if $k \leq p < k+1$ \cite{Choi-72}. In particular, $\Phi$ is positive if $p \geq 1$. In the entanglement theory, the map $\Phi_1$ is called the {\it reduction map} and plays an important role in classifying states of composite systems \cite{HHHH,Guhne,TOPICAL}. Now, consider the family of isotropic states in $\HH \otimes \HH$,
\begin{equation}\label{iso}
  \rho_f = \frac{1-f}{d^2-1} (\oper_d \otimes \oper_d - P^+_d) + f P^+_d , \ \ \ \ \ f\in [0,1] ,
\end{equation}
where $P^+_d$ denotes the projector onto the maximally entangled state. One finds that if the fidelity $f > k/d$, then ${\rm SN}(\rho) \geq k+1$ \cite{Terhal}.
\end{Example}
A large class of $k$-positive maps based on spectral property of the Choi matrix was proposed in \cite{DCAK}. It provides a generalization of the Choi map from Example \ref{k-P}.

Note that one can easily define a hierarchy similar to (\ref{CCC}) via
\begin{equation}\label{EEE-ent}
  {\rm separabale \ states} = \mathcal{E}_1 \subset \mathcal{E}_2 \subset \ldots \subset \mathcal{E}_{d^2-1} \subset \mathcal{E}_{d^2} = {\rm all \ states} ,
\end{equation}
where $\mathcal{E}_k$ contains such states $\rho$ for which
\begin{equation}\label{}
  ({\rm id} \otimes \Phi)[\rho] \geq 0
\end{equation}
for all $\Phi \in \mathcal{C}_k$. In the qubit case, the above hierarchy reduces to
\begin{equation}\label{EEE-ent2}
  {\rm separabale \ states} = \mathcal{E}_1 = \mathcal{E}_2 \subset  \mathcal{E}_{3} \subset \mathcal{E}_{4} = {\rm all \ states} .
\end{equation}

\begin{Example}
Consider the action of $\Lambda_a$ from eq. (\ref{Lam}) on one part of the two-qubit isotropic states $\rho_f$ defined in eq. (\ref{iso}).
We find that
\begin{equation}
({\rm id}\otimes \Lambda_a)[\rho_f]=
\frac{1}{6 (2-a)}\begin{bmatrix}
3-a(1+2f) & 0 & 0 & -a (4f-1)  \\
0 & 3-2a(1-f) & 0 & 0 \\
0 & 0 & 3-2a(1-f) & 0 \\
-a (4f-1) & 0 & 0 & 3-a(1+2f)
\end{bmatrix}.
\end{equation}
The above matrix is positive in the range $\frac{1}{2} < a \leq \frac{2}{3}$, where $\Lambda_a$ is 3-partially contractive but not completely positive, if and only if
\begin{equation}
0\leq f\leq \frac{3}{4}.
\end{equation}
The same result is obtained if one replaces considers $\Omega_\epsilon$ instead of $\Lambda_a$.
Thus, the above inequality provides a necessary condition for $\rho_f\in\mathcal{E}_3$.
Note that $\rho_f\notin\mathcal{E}_2$ (is entangled) for $1/2<f\leq 1$, and if $f>3/4$, then $\rho_f\notin\mathcal{E}_3$ ({\em highly entangled}).
\end{Example}

\section{Conclusions}

We propose a new classification of positive maps based on the contractivity with respect to the trace norm on subspaces of $\mathcal{B}(\mathcal{H}\otimes\mathcal{H})$. The $k$-partially contractive maps give rise to a hierarchy of $d^2$ classes of positive maps that interpolate between PTP and CPTP maps. The parameter $k$, which is the dimension of the subspace, can be interpreted as a strength of positivity. For qubit maps, we introduce an intermediate class of positive but not completely positive maps, corresponding to the choice $k=3$. Moreover, we provide an analytical technique for assessing 3-partial contractivity. Our concept is illustrated with examples of simple qubit maps. Interestingly, our analysis shows that there is a connection between partially contractive maps and the Schwarz maps. Finally, we apply these results to refine the hierarchy of entangled states based on the Schmidt number. In the qubit case, we obtain a single class that interpolates between separable and entangled states.

The topic of partially contractive maps requires further analysis. The first step would be to find less restrictive sufficient conditions for 3-partially contractive qubit maps. Then, the full relation between these maps and the Schwarz maps could be established. Also, it is crucial to find a computational method of constructing $k$-partially contractive maps in any finite dimension $d$. To achieve this, one needs a relation analogical to eq. (\ref{qubit-3d-basis}) for $d=2$. Then, one would be able to find the connection between $k$-partial contractivity and $k$-positivity. Another open question concerns possible applications of our classification. One implementation, already touched upon in this paper, is a more refined analysis of entangled states.

\section{Acknowledgements}

This paper was supported by the Polish National Science Centre project No. 2018/30/A/ST2/00837.

\bibliography{C:/Users/cynda/OneDrive/Fizyka/bibliography}
\bibliographystyle{C:/Users/cynda/OneDrive/Fizyka/beztytulow2}

\end{document}